\newcommand{\abseas}{\texttt{ABSE-AS}\xspace}
\newcommand{\absems}{\texttt{ABSE-MS}\xspace}
\begin{document}

\title{Effective Edge-Fault-Tolerant Single-Source Spanners via Best (or Good) Swap Edges}
\author{Davide~Bil\`o\inst{1} \and Feliciano~Colella\inst{2} \and Luciano~Gual\`a\inst{3} \and Stefano~Leucci\inst{4} \and Guido~Proietti\inst{5,6}}
\titlerunning{Edge-Fault-Tolerant Single-Source Spanners via Best (or Good) Swap Edges}
\authorrunning{Davide Bil\`o et al.}

\institute{
Dipartimento di Scienze Umanistiche e Sociali, University of Sassari, Italy \email{davide.bilo@uniss.it}\and
Gran Sasso Science Institute, L'Aquila,  Italy. \email{feliciano.colella@gssi.it}\and
Dipartimento di Ingegneria dell'Impresa, University of Rome ``Tor Vergata'',  Italy. \email{guala@mat.uniroma2.it}\and
Department of Computer Science, ETH Zürich, Switzerland. \email{stefano.leucci@inf.ethz.ch}\and
Dipartimento di Ingegneria e Scienze dell'Informazione e Matematica, University of L'Aquila, Italy. \email{guido.proietti@univaq.it}\and
Istituto di Analisi dei Sistemi ed Informatica, CNR, Rome, Italy.
}
\maketitle

\begin{abstract}
Computing \emph{all best swap edges} (ABSE) of a spanning tree $T$ of a given $n$-vertex and $m$-edge undirected and weighted graph $G$ means to select, for each edge $e$ of $T$, a corresponding non-tree edge $f$, in such a way that the tree obtained by replacing $e$ with $f$ enjoys some optimality criterion (which is naturally defined according to some objective function originally addressed by $T$).
Solving efficiently an ABSE problem is by now a classic algorithmic issue, since it conveys a very successful way of coping with a (transient) \emph{edge failure} in tree-based communication networks: just replace the failing edge with its respective swap edge, so as that the connectivity is promptly reestablished by minimizing the rerouting and set-up costs.
In this paper, we solve the ABSE problem for the case in which $T$ is a \emph{single-source shortest-path tree} of $G$, and our two selected swap criteria aim to minimize either the \emph{maximum} or the \emph{average stretch} in the swap tree of all the paths emanating from the source. Having these criteria in mind, the obtained structures can then be reviewed as \emph{edge-fault-tolerant single-source spanners}. For them, we propose two efficient algorithms running in $O(m n  +n^2 \log n)$ and $O(m n \log \alpha(m,n))$ time, respectively, and we show that the guaranteed (either maximum or average, respectively) stretch factor is equal to 3, and this is tight. Moreover, for the maximum stretch, we also propose an almost linear $O(m \log \alpha(m,n))$ time algorithm computing a set of \emph{good} swap edges, each of which will guarantee a relative approximation factor on the maximum stretch of $3/2$ (tight) as opposed to that provided by the corresponding BSE. Surprisingly, no previous results were known for these two very natural swap problems.
\end{abstract}

\section{Introduction}
Nowadays there is an increasing demand for an \emph{efficient} and \emph{resilient} information exchange in communication networks.
This means to design on one hand a logical structure onto a given communication infrastructure, which optimizes some sought routing protocol in the absence of failures, and on the other hand, to make such a structure resistant against possible link/node malfunctioning, by suitably adding to it a set of redundant links, which will enter into operation as soon as a failure takes place.

More formally, the depicted situation can be modeled as follows: the underlying communication network is an $n$-vertex and $m$-edge undirected input graph $G=(V(G),E(G),w)$, with positive real edge weights defined by $w$, the logical (or primary) structure is a (spanning) subgraph $H$ of $G$, and finally the additional links is a set of edges $A$ in $E(G) \setminus E(H)$. Under normal circumstances, communication takes place on $H$, by following a certain protocol, but as soon as an edge in $H$ fails, then one or more edges in $A$ come into play, and the communication protocol is suitably adjusted.

In particular, if the primary structure is a (spanning) \emph{tree} of $G$, then a very effective way of defining the set of additional edges is the following: with each tree edge, say $e$, we associate a so-called \emph{best swap edge}, namely a non-tree edge that will replace $e$ once it (transiently) fails, in such a way that the resulting \emph{swap tree} enjoys some nice property in terms of the currently implemented communication protocol. By doing in this way, rerouting and set-up costs will be minimized, in general, and the quality of the post-failure service remains guaranteed. Then, an \emph{all best swap edges} (ABSE) problem is that of finding efficiently (in term of time complexity) a best swap edge for each tree edge.

Due to their fault-tolerance application counterpart, ABSE problems received a large attention by the algorithmic community.
In such a framework, a key role has been  played by the \emph{Shortest-Path Tree} (SPT) structure, which is commonly used for implementing efficiently the \emph{broadcasting} communication primitive.
Indeed, it is was shown already in \cite{IIOY05} that an effective post-swap broadcast protocol can be put in place just after the original SPT undergoes an edge failure. Not surprisingly then, several ABSE problems w.r.t. an SPT have been studied in the literature, for many different swap criteria.

\paragraph{Previous work on swapping in an SPT.}
Since an SPT enjoys several optimality criteria when looking at distances from the source, say $s$, several papers have analyzed the problem in various respects. However, most of the efforts focused on the minimization w.r.t. the following two swap criterion: the maximum/average distance from $s$ to any node which remained disconnected from $s$ after a failure. The currently fastest solutions for these two ABSE problems run in $O(m \log \alpha(m,n))$ time \cite{BGP15} and $O(m \, \alpha(n,n) \log^2 n)$ time \cite{DP07}, respectively.
Moreover, it has been shown that in the swap tree the maximum (resp., average) distance of the disconnected nodes from $s$ is at most twice (resp., triple) that of the new optimum SPT~\cite{NPW03}, and these bounds are tight.

Other interesting swap criteria which have been analyzed include the minimization of the maximum increase (before and after the failure) of the distance from $s$, and the minimization of the distance from $s$ to the root of the subtree that gets disconnected after the failure \cite{NPW01}. Besides the centralized setting, all these swap problems have been studied also in a distributed framework (e.g., see \cite{FEPPS04,FEPPS06,FPPSW08,DLPP13}).

On the other hand, no results are known for the case in which one is willing to select a BSE with the goal of minimizing either the \emph{maximum} or the \emph{average stretch} from the source $s$ of the disconnected nodes, where the stretch of a node is measured as the ratio between its distance from $s$ in the swap tree and in a new optimum SPT.
This is very surprising, since they are (especially the former one) the universally accepted criterion leading to the design of a
\emph{spanner}, i.e., a sparse subgraph preserving shortest paths (between pairs of vertices of interest) in a graph (also in the presence of failures).

In this paper, we aim to fill this gap, by providing efficient solutions exactly for these two swap criteria.

\paragraph{Our results.}
Let us denote by \absems and \abseas the ABSE problem w.r.t. the maximum and the average stretch swap criterion, respectively.
For such problems, we devise two efficient algorithms running in $O(m n +n^2 \log n)$ and $O(m n \log \alpha(m,n))$ time, respectively. Notice that both solutions incorporate the running time for computing all the replacement shortest paths from the source after the failure of every edge of the SPT, as provided in \cite{GP07}, whose computation essentially dominates in an asymptotic sense the time complexity.
Our two solutions are based on independent ideas, as described in the following:

\begin{itemize}
\item for the \absems problem, we develop a \emph{centroid decomposition} of the SPT, and we exploit a distance property that has to be enjoyed by a BSE w.r.t. a nested and log-depth hierarchy of centroids, which will be defined by the subtree detached from the source after the currently analyzed edge failure. A further simple filtering trick on the set of potential swap edges will allow to reduce them from $O(m)$ to $O(n)$, thus returning the promised $O(n^2 \log n)$ time.
\item for the \abseas problem, we instead suitably combine a set of linearly-computable (at every edge fault) information, that essentially will allow to describe in $O(1)$ time the quality of a swap edge. This procedure is in principle not obvious, since to compute the average stretch we need to know, for each swap edge, the $O(n)$ distances to all the nodes in the detached subtree.  Again, by filtering on the set of potential swap edges, we will get an $O(n^2)$ running time, which will be absorbed by the all-replacement paths time complexity.
\end{itemize}

\noindent
Concerning the quality of the corresponding swap trees,  we instead show that the guaranteed
(either maximum or average, respectively) stretch factor w.r.t. the paths emanating from the source (in the surviving graph) is equal to 3, and this is tight.
By using a different terminology, our structures can then be revised as \emph{edge-fault-tolerant single-source $3$-spanners}, and we qualified them as \emph{effective} since they can be computed quickly, are very sparse, provide a very simple alternative post-failure routing, and finally have a small (either maximum or average) stretch.

Although the proposed solutions are quite efficient, their running time can become prohibitive for large and dense input graphs, since in this case they would amount to a time cubic in the number of vertices. Unfortunately, it turns out that their improvement is unlikely to be achieved,
unless one could avoid the explicit recomputation of all post-failure distances from the source.
To circumvent this problem, we then adopt a different approach, which by the way finds application for the (most relevant) max-stretch measure only: we renounce to optimality in the detection of a BSE, in return of a substantial improvement (in the order of a linear factor in $n$) in the runtime.
More precisely, for such a measure, we will compute in an almost linear $O(m \log \alpha(m,n))$ time a set of \emph{good} swap edges (GSE), each of which will guarantee a relative approximation factor on the maximum stretch of $3/2$ (tight) as opposed to that provided by the corresponding BSE. Moreover, a GSE will still guarantee an absolute maximum stretch factor w.r.t. the paths emanating from the source (in the surviving graph) equal to 3 (tight).

Besides that, we also point out another important feature concerned with the computation in a \emph{distributed} setting of all our good swap edges. Indeed, in \cite{DLPP13} it was shown that they can be computed in an \emph{asynchronous message passing system} in essentially optimal \emph{ideal time},\footnote{This is the time obtained with the ideal assumption that the communication time of each message to a neighboring process takes constant time, as in the synchronous model.} space usage, and message complexity, as opposed to the recomputation of all the corresponding BSE, for which
no efficient solution is currently available.

\paragraph{Other related results.}
Besides swap-based approaches, an SPT can be made edge-fault-tolerant by further enriching the set of additional edges, so that the obtained structure has almost-shortest paths emanating from the source, once an edge fails. The currently best trade off between the size of the set of additional edges and the quality of the resulting paths emanating from $s$ is provided in \cite{BGLP14}, where the authors showed that for any arbitrary constant $\varepsilon>0$, one can compute in polynomial time a slightly superlinear (in $n$, and depending on $\varepsilon$) number of additional edges in such a way that the resulting structure retains $(1+\varepsilon)$-stretched post-failure paths from the source.

For the sake of completeness, we also quickly recall the main results concerned with ABSE problems. For the \emph{minimum spanning tree} (MST), a BSE is of course one minimizing the \emph{cost} of the swap tree, i.e., a swap edge of minimum cost. This problem is also known as the MST \emph{sensitivity analysis} problem, and can be solved in $O(m\log\alpha(m,n))$ time~\cite{Pet05}.
Concerning the \emph{minimum diameter spanning tree}, a BSE is instead one minimizing the \emph{diameter} of the swap tree~\cite{IR98,NPW01}, and the best solution runs in $O(m \log \alpha(m,n))$ time \cite{BGP15}. Regarding the \emph{minimum routing-cost spanning tree}, a BSE is clearly one minimizing the \emph{all-to-all routing cost} of the swap tree~\cite{WHC08}, and the fastest solutions for solving this problem has a running time of $O\left(m 2^{O(\alpha(n,n))}\log^2 n\right)$~\cite{BGP14}. Finally, for a \emph{tree spanner}, a BSE is one minimizing the maximum stretch w.r.t. the all pair distances, and the fastest solution to date run in $O(m^2 \log \alpha(m,n))$ time \cite{BiloCGP15}.

To conclude, we point out that the general problem of designing fault-tolerant spanners for the \emph{all-to-all} case has been extensively studied in the literature, and we refer the interested reader to \cite{ChechikLPR09,DinitzK11,BiloGG0P15} and the references therein.

\section{Problem Definition}
Let $G = (V(G), E(G), w)$ be a $2$-edge-connected, edge-weighted, and undirected graph with cost function $w : E(G) \rightarrow \mathbb{R}^+$. We denote by $n$ and $m$ the number of vertices and edges of $G$, respectively. If $X \subseteq V$, let $E(X)$ be the set of edges incident to at least one vertex in $X$.
Given an edge $e \in E(G)$, we will denote by $G-e$ the graph obtained from $G$ by removing edge $e$. Similarly, given a vertex $v \in V(G)$, we will denote by $G-v$ the graph obtained from $G$ by removing vertex $v$ and all its incident edges. Let $T$ be an SPT of $G$ rooted at $s \in V(G)$. Given an edge $e \in E(T)$, we let $C(e)$ be the set of all the \emph{swap edges} for $e$, i.e., all edges in $E(G) \setminus \{ e \}$ whose endpoints lie in two different connected components of $T-e$, and let $C(e,X)$ be the set of all the swap edge for $e$ incident to a vertex in $X \subseteq V(G)$. For any $e \in E(T)$ and $f \in C(e)$, let $T_{e/f}$ denote the \textit{swap tree} obtained from $T$ by replacing $e$ with $f$. Let $T_v = (V(T_v), E(T_v))$ be the subtree of $T$ rooted at $v \in V(G)$.
Given a pair of vertices $u,v \in V(G)$, we denote by $d_G(u,v)$ the \emph{distance} between $u$ and $v$ in $G$. Moreover, for a swap edge $f=(x,y)$, we assume that the first appearing endvertex is the one closest to the source, and we may denote by $w(x,y)$ its weight. We define the \textit{stretch factor of $y$ w.r.t. $s,T,G$} as $\sigma_G(T,y) = \frac{d_T(s,y)}{d_G(s,y)}$.

Given an SPT $T$ of $G$, the \absems problem is that of finding, for each edge $e=(a,b) \in E(T)$, a swap edge $f^*$ such that:

\[
	f^* \in \arg \min_{f \in C(e)} \left\{\mu(f):= \max_{v\in V(T_b)}\sigma_{G-e}\big(T_{e/f},v\big) \right\}.
\]

\noindent Similarly, the \abseas problem is that of finding, for each edge $e=(a,b) \in E(T)$, a swap edge $f^*$ such that:
\[
	f^* \in \arg \min_{f \in C(e)} \Bigg\{ \lambda(f):=\frac{1}{|V(T_b)|}\sum_{v\in V(T_b)}  \sigma_{G-e}\big(T_{e/f},v\big) \Bigg\}.
\]

\noindent
We will call $\mu(f)$ (resp., $\lambda(f)$) the \emph{max-}(resp., \emph{avg-})\emph{stretch} of $f$ w.r.t. $e$.

\section{An Algorithm for \absems}\label{sec:absemsalg}
In this section we will show an efficient algorithm to solve the \absems problem in $O(mn + n^2 \log n)$ time.
Notice that a brute-force approach would require $O(mn^2)$ time, given by the $O(n)$ time which is needed to evaluate the quality of each of the $O(m)$ swap edges, for each of the $n-1$ edges of $T$. Our algorithm will run through $n-1$ phases, each returning in $O(m+n\log n)$ time a BSE for a failing edge of $T$, as described in the following.

Let us fix $e=(a,b)$ as the failing edge. First, we compute in $O(m+ n \log n)$ time all the distances in $G-e$ from $s$. Then, we filter the $O(m)$ potential swap edges to $O(n)$, i.e., at most one for each node $v$ in $T_b$. Such a filtering is simply obtained by selecting, out of all edges $f=(x,v) \in C(e, \{ v  \})$, the one minimizing the measure $d_G(s,x)+w(f)$. Indeed, it is easy to see that the max-stretch of such selected swap edge is never worse than that of every other swap edge in $C(e)$. This filtering phase will cost $O(m)$ total time. As a consequence, we will henceforth assume that $|C(e)| = O(n)$.

Then, out of the obtained $O(n)$ swap edges for $e$, we further restrict our attention to a subset of $O(\log n)$ \emph{candidates} as BSE, which are computed as follows. Let $\Lambda$ denote a generic subtree of $T_b$, and assume that initially $\Lambda=T_b$.
First of all, we compute in $O(|V(\Lambda)|)$ time a \emph{centroid} $c$ of $\Lambda$, namely a node whose removal from $\Lambda$ splits $\Lambda$ in a forest $F$ of subtrees, each having at most $|V(\Lambda)|/2$ nodes \cite{jordan1869assemblages}; then, out of all the swap edges, we select a candidate edge $f$ minimizing the distance from $s$ to $c$ in $T_{e/f}$, i.e.,
\[
	f \in \displaystyle \arg\min_{(x',v') \in C(e)} \big\{d_T(s,x')+w(x',v')+d_{T}(v',c)\big\};
\]
then, we compute a \emph{critical node} $z$ for the selected swap edge $f$, i.e.,
\[
	z \in \arg\max_{z' \in V(T_b)}\sigma_{G-e}\big(T_{e/f},z'\big).
\]
We now select a suitable subtree $\Lambda'$ of the forest $F$,
and we pass to the selection of the next candidate BSE by recursing on $\Lambda'$, until $|V(\Lambda')|=1$. More precisely, $\Lambda'$ is the first tree of $F$ containing the first vertex of $V(\Lambda)$ that is encountered by following the path in $T$ from $z$ towards $c$ (see Figure~\ref{fig:centroid}).

Due to the property of the centroid, the number of recursions will be $O\big(\log |V(T_b)|\big)= O(\log n)$, as promised, each costing $O(n)$ time. Moreover, at least one of the candidate edges will be a BSE for $e$, and hence it suffices to choose the edge minimizing the maximum stretch among the corresponding $O(\log n)$ candidate edges.
This step is done within the recursive procedure by comparing the current candidate edge $f$ with the best candidate resulting from the nested recursive calls.

A more formal description of each phase is shown in Algorithm~\ref{alg:absems}. In the following we prove the correctness of our algorithm.

\begin{figure}[t]
	\centering
	\includegraphics[scale=1.2]{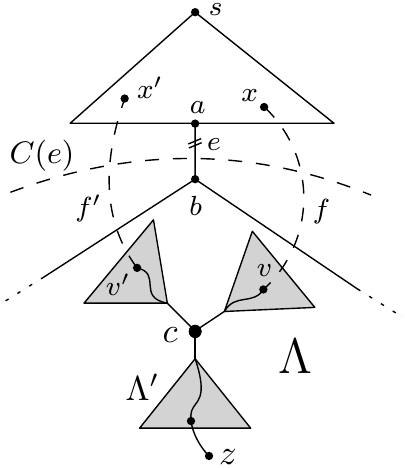}
	\caption{The situation illustrated in Lemma~\ref{lemma:centroid}. The subtree $\Lambda$ is represented by the three gray
triangles along with the vertex c. $f=(x,v)$ is the candidate swap edge for $e$ that minimizes $d_T(s,x)+w(f)+ d_T(v,c)$, and $z$ is its corresponding critical node. The algorithm will compute the next candidate swap edge by recursing on $\Lambda'$.}
	\label{fig:centroid}
\end{figure}

\begin{lemma}
\label{lemma:centroid}
Let $e=(a,b)$ be a failing edge, and let $\Lambda$ be a subtree of $T_b$. Given a vertex $c \in V(\Lambda)$, let $f \in \arg\min_{ (x',v') \in C(e)} \big\{ d_T(s,v') + w(x',v') + d_T(v', c) \big\}$ and let $z$ be a critical node for $f$. Let $F$ be the forest obtained by removing the edges incident to $c$ from $\Lambda$, and let $\Lambda'$ be the tree of $F$ containing the first vertex of the path from $z$ to $c$ in $T$ that is also in $V(\Lambda)$. For any swap edge $f' \in C(e,V(\Lambda))$, if $\mu(f') < \mu(f)$ then $f' \in C(e, V(\Lambda'))$.
\end{lemma}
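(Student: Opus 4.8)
\medskip
\noindent\emph{Proof plan.} I would prove the contrapositive: every swap edge $f'=(x',v')\in C(e,V(\Lambda))$ whose $T_b$-endpoint $v'$ does \emph{not} lie in $V(\Lambda')$ satisfies $\mu(f')\ge\mu(f)$. This is what is needed, since $V(\Lambda),V(\Lambda')\subseteq V(T_b)$, so whether a swap edge belongs to $C(e,V(\Lambda))$ or to $C(e,V(\Lambda'))$ is determined solely by its endpoint inside $T_b$. The starting point is the elementary distance identity underlying everything: for a swap edge $g=(x_g,v_g)\in C(e)$ (so $v_g\in V(T_b)$ and $x_g\notin V(T_b)$) and any $u\in V(T_b)$, the $s$--$u$ path of $T_{e/g}$ runs inside $T$ from $s$ to $x_g$, crosses $g$, and then follows $T_b$ from $v_g$ to $u$; hence
\[
  \delta_g(u)\ :=\ d_{T_{e/g}}(s,u)\ =\ d_T(s,x_g)+w(g)+d_T(v_g,u),
\]
and the denominator in $\sigma_{G-e}(T_{e/g},u)=\delta_g(u)/d_{G-e}(s,u)$ is independent of $g$.

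Next I would use the two facts built into the statement: $f$ is a minimizer of $d_T(s,x_g)+w(g)+d_T(v_g,c)=\delta_g(c)$ over $g\in C(e)$, so $\delta_f(c)\le\delta_{f'}(c)$; and $z$ is a critical node for $f$, so $\mu(f)=\delta_f(z)/d_{G-e}(s,z)$. Thus it is enough to prove $\delta_{f'}(z)\ge\delta_f(z)$, because then
\[
  \mu(f')\ \ge\ \sigma_{G-e}\big(T_{e/f'},z\big)\ =\ \frac{\delta_{f'}(z)}{d_{G-e}(s,z)}\ \ge\ \frac{\delta_f(z)}{d_{G-e}(s,z)}\ =\ \sigma_{G-e}\big(T_{e/f},z\big)\ =\ \mu(f).
\]
Expanding both sides with the identity and cancelling the $f$-independent terms gives
\[
  \delta_{f'}(z)-\delta_f(z)\ =\ \big(\delta_{f'}(c)-\delta_f(c)\big)+\big(d_T(v',z)-d_T(v',c)\big)-\big(d_T(v,z)-d_T(v,c)\big),
\]
where the first summand is $\ge 0$ and, by the triangle inequality in $T$, $d_T(v,z)-d_T(v,c)\le d_T(c,z)$. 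Hence the lemma reduces to the purely topological claim that $c$ lies on the $T$-path from $v'$ to $z$, i.e. $d_T(v',z)-d_T(v',c)=d_T(c,z)$.

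To prove this claim I would analyse the $T$-path $P$ from $z$ to $c$. Let $q$ be the first vertex of $P$, scanning from $z$, that lies in $V(\Lambda)$ (it exists since $c\in V(\Lambda)$), so by definition $\Lambda'$ is the component of $F$ containing $q$. Since $\Lambda$ is a subtree of $T$ and $q,c\in V(\Lambda)$, the portion of $P$ from $q$ to $c$ stays in $\Lambda$, and its sub-path from $q$ to the neighbour of $c$ on $P$ avoids $c$, hence lies in the single component of $F$ that contains $q$, namely $\Lambda'$; therefore $V(P)\cap V(\Lambda)\subseteq V(\Lambda')\cup\{c\}$. On the other hand $v'\in V(\Lambda)\setminus V(\Lambda')$: either $v'=c$, when the claim is immediate, or $v'$ lies in a component of $F$ distinct from $\Lambda'$, in which case the $T$-path from $v'$ to $c$ lies in $V(\Lambda)$ and meets $V(\Lambda')\cup\{c\}$ only at $c$. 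The vertex of $P$ closest to $v'$ in $T$ lies simultaneously on the $T$-path from $v'$ to $c$ and in $V(P)\cap V(\Lambda)\subseteq V(\Lambda')\cup\{c\}$, so it must be $c$; this is exactly the claim.

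The main obstacle is precisely this last topological step: making explicit why choosing $\Lambda'$ as the component of $F$ lying in the direction of the critical node $z$ forces $c$ onto the $v'$--$z$ path. Everything else is bookkeeping, though one should keep in mind that $z$ need not lie in $\Lambda$, that only the $T_b$-endpoint of a swap edge is constrained by membership in $C(e,V(\Lambda))$, and that distances in $T$ between two vertices of $T_b$, and between $s$ and a vertex outside $T_b$, coincide with the corresponding distances in any swap tree $T_{e/g}$.
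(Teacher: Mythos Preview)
Your proof is correct and follows essentially the same route as the paper: argue the contrapositive by bounding $\mu(f')\ge\sigma_{G-e}(T_{e/f'},z)\ge\sigma_{G-e}(T_{e/f},z)=\mu(f)$, using the optimality of $f$ at $c$, the triangle inequality for $d_T(v,z)$, and the key topological fact that $c$ lies on the $T$-path from $v'$ to $z$. Your treatment of this last point is in fact more careful than the paper's one-line justification (``$v'$ and $z$ are in two different trees of $F$''), since you explicitly accommodate the possibility that $z\notin V(\Lambda)$ by tracking the first vertex $q$ of the $z$--$c$ path that enters $\Lambda$.
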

\begin{proof}
	Let $f=(x, v)$ and $f' = (x', v')$.
	We show that if $v' \in V(\Lambda) \setminus V(\Lambda')$ then $\mu(f') \ge \mu(f)$ (see also Figure~\ref{fig:centroid}). Indeed:
	\begin{align*}
		\mu(f') & \ge \sigma_{G-e}\big(T_{e/f'}, z\big) =
		\frac{ d_{T_{e/f'}}(s,z)}{d_{G-e}(s, z)}
		= \frac{ d_T(s,x') + w(f') + d_T(v',c) + d_T(c,z) }{d_{G-e}(s, z)} \\
		& \ge \frac{ d_T(s,x) + w(f) + d_T(v,c) + d_T(c,z) }{d_{G-e}(s, z)}
		\ge \frac{ d_{T_e/f}(s,z) }{d_{G-e}(s, z)} = \sigma_{G-e}\big(T_{e/f}, z\big) \\
		& = \mu(f),
	\end{align*}
where we used the equality $d_T(v',z) = d_T(v',c) + d_T(c,z)$, which follows from the fact that the path from $v'$ to $z$ in $T$ must traverse $c$ as $v'$ and $z$ are in two different trees of $F$.
\qed
\end{proof}

\begin{algorithm}[t]{\caption{\absems{}($e, \Lambda$)}}
	\label{alg:absems}
	\SetKwInOut{Input}{Input}
	\SetKwInOut{Output}{Output}
	\Input{a failing edge $e=(a,b) \in E(T)$, a subtree $\Lambda$ of $T_b$.}
	\Output{an edge $f \in C(e)$. If $C\big(e, V(\Lambda)\big)$ contains a BSE for $e$, $f$ is a BSE for $e$.}

	\BlankLine
	
	$c \gets $ Centroid of $\Lambda$\;
	
	Let $f \in \displaystyle \arg\min_{ (x',v') \in C(e)} \big\{ d_T(s,x') + w(x',v') + d_T(v', c) \big\}$\;

	\BlankLine
	
	\lIf{ $|V(\Lambda)|=1$ }{\Return $f$}

	\BlankLine
    Let $z \in \arg \max_{z' \in V(T_b)}\sigma_{G-e}\big(T_{e/f},z'\big)$\;
	Let $F$ be the forest obtained by removing the edges incident to $c$ from $\Lambda$\;
	$y \gets $ first vertex along the path from $z$ towards $c$ in $T$ that is also in $V(\Lambda)$\;
	$\Lambda' \gets$ tree of $F$ containing $y$\;
	$f' \gets \absems{}(e, \Lambda')$\;
	
	\BlankLine
		
	\leIf{ $\mu(f') < \mu(f)$ }{\Return $f'$}{\Return $f$}
\end{algorithm}

\begin{lemma}
	If $C\big(e, V(\Lambda)\big)$ contains a BSE for $e$ then $\absems{}(e, \Lambda)$ returns a BSE for $e$.
	\label{lemma:absems_correctness}
\end{lemma}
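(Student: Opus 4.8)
The plan is to prove the statement by induction on $|V(\Lambda)|$, following exactly the recursive structure of Algorithm~\ref{alg:absems}. The base case $|V(\Lambda)|=1$ is where the real content lies: here $\Lambda$ consists of a single vertex, which is also its centroid $c$, so the algorithm returns the edge $f=(x,v)$ minimizing $d_T(s,x')+w(x',v')+d_T(v',c)$ over all $f'=(x',v')\in C(e)$, i.e.\ the swap edge minimizing $d_{T_{e/f}}(s,c)$. I would first record the elementary identity that, for any swap edge $g=(x_g,v_g)\in C(e)$ and any $w\in V(T_b)$, the unique $s$–$w$ path in $T_{e/g}$ is forced to enter $T_b$ through $v_g$, so that $d_{T_{e/g}}(s,w)=d_T(s,x_g)+w(g)+d_T(v_g,w)$; subtracting the same identity for $w=c$ gives $d_{T_{e/g}}(s,w)=d_{T_{e/g}}(s,c)+\big(d_T(v_g,w)-d_T(v_g,c)\big)\le d_{T_{e/g}}(s,c)+d_T(c,w)$ by the triangle inequality in $T$.

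Applying this with $g$ equal to the returned edge $f$, and using that $d_{T_{e/f}}(s,c)\le d_{T_{e/\hat f}}(s,c)$ for any BSE $\hat f=(\hat x,c)\in C(e,\{c\})$ (whose $T_b$-endpoint is $c$, so $d_{T_{e/\hat f}}(s,w)=d_{T_{e/\hat f}}(s,c)+d_T(c,w)$), I obtain $d_{T_{e/f}}(s,w)\le d_{T_{e/\hat f}}(s,w)$ for every $w\in V(T_b)$. Dividing by $d_{G-e}(s,w)$ gives $\sigma_{G-e}(T_{e/f},w)\le\sigma_{G-e}(T_{e/\hat f},w)$ for all such $w$, hence $\mu(f)\le\mu(\hat f)$, so $f$ is itself a BSE and the base case holds.

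For the inductive step $|V(\Lambda)|>1$ I would assume the claim for all strictly smaller subtrees and let $\hat f\in C(e,V(\Lambda))$ be a BSE. Let $f$ be the candidate selected at the current level and $\Lambda'$ the subtree on which the algorithm recurses; these, together with the critical node $z$, are precisely the objects appearing in Lemma~\ref{lemma:centroid} (for the centroid $c$ of $\Lambda$). If $f$ is already a BSE, then $\mu(f)\le\mu(f')$ for every $f'\in C(e)$, the test $\mu(f')<\mu(f)$ in the algorithm fails, and $f$ is returned — done. Otherwise $\mu(\hat f)<\mu(f)$, so Lemma~\ref{lemma:centroid} gives $\hat f\in C(e,V(\Lambda'))$; hence the precondition of the recursive call $\absems(e,\Lambda')$ is satisfied and, by the inductive hypothesis, that call returns a BSE $f'$. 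Since $\mu(f')=\mu(\hat f)<\mu(f)$, the final test succeeds and the algorithm returns $f'$, a BSE. This closes the induction.

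The step I expect to be the main obstacle is the base case — specifically, realizing that minimizing only the post-swap distance to the single surviving centroid $c$ already yields a swap tree that dominates, \emph{pointwise over all of $V(T_b)$}, the swap tree of \emph{any} competitor incident to $c$; the two ingredients that make this work are the triangle inequality in $T$ and the fact that every $s$-to-$T_b$ path in a swap tree must pass through the unique $T_b$-endpoint of the swap edge. The inductive step is then routine bookkeeping: Lemma~\ref{lemma:centroid} propagates the ``contains a BSE'' invariant to $\Lambda'$, and the observation that a BSE can never lose the comparison $\mu(f')<\mu(f)$ guarantees that the algorithm never discards one.
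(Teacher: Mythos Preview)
Your proof is correct and follows essentially the same approach as the paper: induction on $|V(\Lambda)|$, with the base case handled by the triangle inequality comparison $d_{T_{e/f}}(s,w)\le d_{T_{e/f}}(s,c)+d_T(c,w)\le d_{T_{e/\hat f}}(s,c)+d_T(c,w)=d_{T_{e/\hat f}}(s,w)$, and the inductive step by invoking Lemma~\ref{lemma:centroid} to push the BSE into $\Lambda'$ before applying the hypothesis. The only small omission is that the paper first notes the algorithm always returns an edge of $C(e)$, which is what makes the comparison $\mu(f')<\mu(f)$ well-defined even when $f$ is already a BSE.
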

\begin{proof}
	First of all notice that Algorithm~\ref{alg:absems} only returns edges in $C(e)$.

	We prove the claim by induction on $|V(\Lambda)|$.
	If $|V(\Lambda)|=1$ and $C\big(e, V(\Lambda)\big)$ contains a BSE $f^*$ for $e$, then let $f$ be the edge of $C(e)$ returned by Algorithm~\ref{alg:absems} and let $V(\Lambda)=\{c\}$. By choice of $f$, for every $v \in V(T_b)$,
$$d_{T_{e/f}}(s,v)\leq d_{T_{e/f}}(s,c)+d_T(c,v)=d_{T_{e/f^*}}(s,c)+d_T(c,v) = d_{T_{e/f^*}}(s,v),$$
from which we derive that $\mu(f) = \mu(f^*)$, and the claim follows.
	
	If $|V(\Lambda)| > 1$ and $C\big(e, V(\Lambda)\big)$ contains a BSE for $e$, we distinguish two cases depending on whether the edge $f$ computed by Algorithm~\ref{alg:absems} is a BSE for $e$ or not. If that is the case, then $\mu(f) \le \mu(f'') \, \forall f'' \in C(e)$ and the algorithm correctly returns $f$.
	Otherwise, by Lemma~\ref{lemma:centroid}, any edge $f' \in C(e, V(\Lambda))$ such that $\mu(f') < \mu(f)$ must belong to $C\big(e, V(\Lambda')\big)$.
	It follows that $\Lambda'$ contains a BSE for $e$ and since $1 \le |V(\Lambda')| < |V(\Lambda)|$ we have, by inductive hypothesis, that the edge $f'$ returned by $\absems{}(G, e, \Lambda')$ is a BSE for $e$. Clearly $\mu(f') < \mu(f)$ and hence Algorithm~\ref{alg:absems} correctly returns $f'$.
\qed
\end{proof}

Since each invocation of Algorithm~\ref{alg:absems} requires $O(n)$ time, Lemma~\ref{lemma:absems_correctness} together with the previous discussions allows us to state the main theorem of this section:

\begin{theorem}
	There exists an algorithm that solves the \absems problem in $O(mn + n^2 \log n)$ time.
\end{theorem}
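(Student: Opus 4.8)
The plan is to assemble the theorem from the pieces already developed in the section. The overall algorithm processes the $n-1$ edges of $T$ one at a time; for a fixed failing edge $e=(a,b)$, a single phase must produce a BSE for $e$. I would structure the argument around three cost contributions within one phase, and then sum over all edges.

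First I would account for the preprocessing common to each phase. For the fixed edge $e$, computing all distances in $G-e$ from $s$ takes $O(m+n\log n)$ time (a single-source shortest-path computation on $G-e$), and these distances are exactly what is needed to evaluate $\sigma_{G-e}(T_{e/f},\cdot)$ and hence $\mu(f)$. Next I would invoke the filtering step: among all swap edges in $C(e)$ incident to a common vertex $v\in V(T_b)$, retaining only the one minimizing $d_G(s,x)+w(f)$ loses nothing, since in the swap tree $d_{T_{e/f}}(s,v')=d_G(s,x)+w(f)+d_T(v,v')$ for every $v'$, so a pointwise-better choice of $(x,v)$ dominates the max-stretch. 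This reduces $|C(e)|$ to $O(n)$ at a total cost of $O(m)$ over the whole phase (one scan of the incident edges), so from here on $|C(e)|=O(n)$.

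Then I would turn to the recursive core, Algorithm~\ref{alg:absems}, invoked initially as $\absems(e,T_b)$. Correctness is immediate from Lemma~\ref{lemma:absems_correctness}: since $C(e,V(T_b))=C(e)$ certainly contains a BSE for $e$, the call returns a BSE for $e$. For the running time of one invocation I would argue that each of its steps — computing a centroid of $\Lambda$ (classical, $O(|V(\Lambda)|)$ by \cite{jordan1869assemblages}), the $\arg\min$ over $C(e)$ defining the candidate $f$ (a scan of $O(n)$ swap edges, each evaluated in $O(1)$ using precomputed $d_T$-values), determining a critical node $z$ (evaluating $\sigma_{G-e}(T_{e/f},z')$ over the $O(n)$ vertices $z'\in V(T_b)$, again $O(1)$ apiece with the $G-e$ distances in hand), building the forest $F$, walking from $z$ toward $c$ to find $y$, and comparing $\mu(f')$ with $\mu(f)$ — all run in $O(n)$ time, excluding the recursive call. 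The depth of the recursion is $O(\log n)$: by the centroid property every tree of $F$ has at most $|V(\Lambda)|/2$ vertices, and $\Lambda'$ is one of them, so $|V(\Lambda')|\le |V(\Lambda)|/2$, and the recursion bottoms out once $|V(\Lambda')|=1$. Hence one phase costs $O(m+n\log n)$ (preprocessing and filtering) plus $O(n\log n)$ (the recursion) $=O(m+n\log n)$.

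Finally I would sum over the $n-1$ choices of $e\in E(T)$, obtaining $O\big(n(m+n\log n)\big)=O(mn+n^2\log n)$, which is the claimed bound. I do not expect a genuine obstacle here: the conceptual work — the filtering argument, the centroid lemma (Lemma~\ref{lemma:centroid}), and the inductive correctness proof (Lemma~\ref{lemma:absems_correctness}) — is already in place, and the theorem is essentially a bookkeeping assembly. The only point requiring a little care is to state explicitly that each swap edge and each candidate critical node can be evaluated in $O(1)$ amortized time once the $G-e$ distances and the (globally precomputed) $d_T$-values are available, so that the per-invocation bound is honestly $O(n)$ rather than $O(n\log n)$ or worse; this is routine but worth spelling out.
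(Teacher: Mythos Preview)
Your proposal is correct and follows essentially the same approach as the paper: you assemble the per-phase cost from the $O(m+n\log n)$ distance computation, the $O(m)$ filtering, and the $O(n\log n)$ centroid recursion (whose correctness rests on Lemma~\ref{lemma:absems_correctness} and whose depth bound rests on the centroid property), and then sum over the $n-1$ failing edges. In fact you spell out the bookkeeping more explicitly than the paper does, which simply points to ``the previous discussions'' and the $O(n)$ cost per invocation of Algorithm~\ref{alg:absems}.
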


\section{An Algorithm for \abseas}
In this section we show how the \abseas problem can be solved efficiently in $O(mn \log \alpha (m,n))$ time.
Our approach first of all, in a preprocessing phase, computes in $O(m \, n \log \alpha(m,n))$ time all the replacement shortest paths from the source after the failure of every edge of $T$ \cite{GP07}. Then, the algorithm will run through $n-1$ phases, each returning in $O(m)$ time a BSE for a failing edge of $T$, as described in the following. Thus, the overall time complexity will be dominated by the preprocessing step.

Let us fix $e=(a,b)\in E(T)$ as the failing edge of $T$. The idea is to show that, after a $O(n)$ preprocessing time, we can compute the avg-stretch $\lambda(f)$ of any $f$ in constant time. This immediately implies that we can compute a BSE for $e$ by looking at all $O(m)$ swap edges for $e$.

Let $U=V(T_b)$ and let $y$ be a node in $U$, we define:
\[ M(y)=\sum_{v \in U} \frac{d_T(y,v)}{d_{G-e}(s,v)} \] and \[Q = \sum_{v \in U} \frac{1}{d_{G-e}(s,v)}. \]
Let $f=(x,y)$ be a candidate swap edge incident in $y \in U$. The avg-stretch of $f$ can be rewritten as:

\[\lambda(f)=\sum_{v \in U}\frac{d_T(s,x)+w(f)+d_T(y,v)}{d_{G-e}(s,v)} = \big(d_T(s,x)+w(f)\big)Q + M(y).\]

Hence, the avg-stretch of $f$ can be computed in $O(1)$ time, once $Q$ and $M(y)$ are available in constant time. Observe that $Q$ does not depend on $y$ and can be computed in $O(n)$ time. The rest of this section is devoted to show how to compute $M(y)$ for every $y \in U$ in $O(n)$ overall time.

\subsubsection*{Computing $M(y)$ for all $y \in U.$}
Let $y$ e $y'$ be two nodes in $U$ such that $y$ is a child of $y'$ in $T$. Moreover, let $U_y=V(T_y)$, and let $Q_y = \sum_{v \in U_y} \frac{1}{d_{G-e}(s,y)}$. Hence, we can rewrite $M(y)$ and $M(y')$ as follows:

\[M(y) = \sum_{v \in U_y} \frac{d_T(y,v)}{d_{G-e}(s,v)} + \sum_{v \in U-U_y} \frac{w(y,y')+d_T(y',v)}{d_{G-e}(s,v)}\]
and
\[M(y') = \sum_{v \in U_y} \frac{w(y,y')+d_T(y,v)}{d_{G-e}(s,v)} + \sum_{v \in U-U_y} \frac{d_T(y',v)}{d_{G-e}(s,v)}.\]
Therefore, we have:
\begin{equation}\label{eq:propsum}
M(y) = M(y') + w(y,y') \big(- Q_y + (Q - Q_y)\big) = M(y') + w(y,y') \big(Q - 2Q_y\big).
\end{equation}
\noindent The above equation implies that $M(y)$ can be computed in $O(1)$ time, once we have computed $M(y')$, $Q$ and $Q_y$. As a consequence, we can compute all the $M(y)$'s as follows. First, we compute $Q_y$ for every $y\in D$ in $O(n)$ overall time by means of a postorder visit of $T_b$. Notice also that $Q=Q_b$. Then, we compute $M(b)$ explicitly in $O(n)$ time. Finally, we compute all the other $M(y)$'s by performing a preorder visit of $T_b$. When we visit a node $y$, we compute $M(y)$ in constant time using (\ref{eq:propsum}). Thus, the visit will take $O(n)$ time. We have proved the following:
\begin{theorem}
There exists an algorithm that solves the \abseas problem in $O(mn \log \alpha (m,n))$ time.
\end{theorem}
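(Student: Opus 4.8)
The plan is to assemble the algorithm from the two ingredients developed above: the all-replacement-distances preprocessing of \cite{GP07} and the constant-time evaluation of $\lambda(f)$ per swap edge. First I would run the algorithm of \cite{GP07} once, obtaining $d_{G-e}(s,v)$ for every $e\in E(T)$ and every $v\in V(G)$ in $O(mn\log\alpha(m,n))$ time; this single preprocessing step will turn out to dominate the whole cost. Then I would process the $n-1$ failing edges one at a time. Fixing $e=(a,b)$ and $U=V(T_b)$, the goal of a phase is to compute a BSE for $e$ in $O(m)$ time, which — since a phase must in any case scan the $O(m)$ swap edges in $C(e)$ — reduces to being able to evaluate $\lambda(f)$ in $O(1)$ time per candidate. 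Using the rewriting $\lambda(f)=(d_T(s,x)+w(f))\,Q+M(y)$ for $f=(x,y)$ with $y\in U$, this amounts to having $Q$ and all the $M(y)$'s precomputed; $Q$ is a single $O(n)$ sum, so the real work is the $M(y)$'s.

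The heart of the argument is therefore to compute $M(y)$ for \emph{all} $y\in U$ in $O(n)$ total time, rather than the naive $O(n)$ per vertex. For this I would exploit the incremental identity $M(y)=M(y')+w(y,y')(Q-2Q_y)$ (equation~(\ref{eq:propsum})), valid whenever $y$ is a child of $y'$ in $T$, where $Q_y=\sum_{v\in V(T_y)}1/d_{G-e}(s,v)$: it is checked by splitting $U$ into $V(T_y)$ and its complement and noting that re-anchoring the tree distances from $y'$ to $y$ adds $w(y,y')$ to every term outside $V(T_y)$ and removes it from every term inside. Concretely, a phase would (i) compute all $Q_y$ by one postorder traversal of $T_b$ (each child folds its value into its parent), with $Q=Q_b$; (ii) compute $M(b)$ explicitly in $O(n)$; and (iii) propagate all remaining $M(y)$ by a preorder traversal, each obtained in $O(1)$ from its parent via~(\ref{eq:propsum}).

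Correctness of the output then follows because $(d_T(s,x)+w(f))Q+M(y)$ is literally the definition of $\lambda(f)$ after substituting $d_{T_{e/f}}(s,v)=d_T(s,x)+w(f)+d_T(y,v)$ for each $v\in U$; taking the minimum over $f\in C(e)$ yields a BSE for $e$. For the running time, preprocessing is $O(mn\log\alpha(m,n))$, and each of the $n-1$ phases costs $O(n)$ for the three traversals plus $O(m)$ for scanning the swap edges, i.e. $O(m)$ per phase and $O(mn)$ overall; the two terms sum to $O(mn\log\alpha(m,n))$. The step I expect to be the genuine obstacle — and the only place where a naive implementation would blow the budget up to $O(mn^2)$ — is step (iii): a priori each $M(y)$ is a sum over the whole detached subtree, and it is precisely the telescoping form of~(\ref{eq:propsum}) that collapses these $|U|$ sums into a single linear-time sweep.
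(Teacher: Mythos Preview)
Your proposal is correct and follows essentially the same approach as the paper: the \cite{GP07} preprocessing, the rewriting $\lambda(f)=(d_T(s,x)+w(f))Q+M(y)$, the recurrence~(\ref{eq:propsum}) for the $M(y)$'s, and the postorder/preorder pair of traversals to compute all $Q_y$ and then all $M(y)$ in $O(n)$ per failing edge. The cost accounting matches as well.
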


\section{An approximate solution for \absems}
In this section we show that for the max-stretch measure we can compute in an almost linear $O(m \log \alpha(m,n))$ time, a set of \emph{good} swap edges (GSE), each of which guarantees a relative approximation factor on the maximum stretch of $3/2$ (tight), as opposed to that provided by the corresponding BSE. Moreover, as shown in the next section, each GSE still guarantees an absolute maximum stretch factor w.r.t. the paths emanating from the source (in the surviving graph) equal to 3 (tight).

\begin{lemma} \label{thm:apx-absems}
Let $e$ be a failing edge in $T$, let
\[g=(x,y) \in \arg \min_{(x',v') \in C(e)}\big\{d_T(s,x')+w(x',v')\big\},\]
\noindent and, finally, let $f=(x',y')$ be a best swap edge for $e$ w.r.t. \absems. Then, $\mu(g)/\mu(f) \leq 3/2$.
\end{lemma}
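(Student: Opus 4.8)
The plan is to bound $\mu(g)$ from above in two complementary ways and then show that, over the range of values that $\mu(f)$ can take, the worse of the two resulting ratios equals $3/2$. Throughout, set $D:=d_T(s,x)+w(g)=d_{T_{e/g}}(s,y)$; by the choice of $g$ this equals $\min_{(p,q)\in C(e)}\{d_T(s,p)+w(p,q)\}$, and moreover $d_T(s,y)\le D$ since $(x,y)\in E(G)$. The single structural fact I would establish first is that $d_{G-e}(s,v)\ge D$ for every $v\in V(T_b)$: a shortest path from $s$ to $v$ in $G-e$ must leave the component of $T-e$ containing $s$ through some swap edge $(p,q)$ with $p$ on the $s$-side, and its prefix up to $p$ together with that edge already costs $d_{G-e}(s,p)+w(p,q)\ge d_T(s,p)+w(p,q)\ge D$.

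Next, fix a critical node $z$ for $g$, so that $\mu(g)=\big(D+d_T(y,z)\big)/d_{G-e}(s,z)$. The \emph{absolute} bound $\mu(g)\le 3$ is then immediate: write $d_T(y,z)\le d_T(y,b)+d_T(b,z)$, use $d_T(y,b)\le d_T(s,y)\le D$ and $d_T(b,z)\le d_T(s,z)\le d_{G-e}(s,z)$, and divide, getting $\mu(g)\le 1+2D/d_{G-e}(s,z)\le 3$ by the structural fact. For the \emph{relative} bound I would probe the best swap edge $f=(x',y')$ at two carefully chosen vertices. Probing at $y$: since $d_{G-e}(s,y)\le d_{T_{e/g}}(s,y)=D$ while $d_{T_{e/f}}(s,y)=d_T(s,x')+w(f)+d_T(y',y)\ge D+d_T(y,y')$, one gets $\mu(f)\ge\sigma_{G-e}(T_{e/f},y)\ge 1+d_T(y,y')/D$, i.e. $d_T(y,y')\le(\mu(f)-1)D$. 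Probing at $z$: since $d_{T_{e/f}}(s,z)=d_T(s,x')+w(f)+d_T(y',z)\ge D+d_T(y',z)$, one gets $\mu(f)\ge\sigma_{G-e}(T_{e/f},z)\ge\big(D+d_T(y',z)\big)/d_{G-e}(s,z)$, i.e. $d_T(y',z)\le\mu(f)\,d_{G-e}(s,z)-D$. Substituting $d_T(y,z)\le d_T(y,y')+d_T(y',z)$ together with these two estimates into the expression for $\mu(g)$, and invoking the structural fact once more to replace $(\mu(f)-1)D/d_{G-e}(s,z)$ by $\mu(f)-1$, should yield $\mu(g)\le 2\mu(f)-1$.

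It remains to combine the two bounds: $\mu(g)\le\min\{3,\,2\mu(f)-1\}$, while $\mu(f)\ge 1$ (a tree-to-graph distance ratio is at least one). Hence $\mu(g)/\mu(f)\le\min\{3/\mu(f),\,2-1/\mu(f)\}$, whose value — as a function of $\mu(f)\ge 1$ — is the minimum of a decreasing and an increasing function, so it is largest exactly at the crossover point $\mu(f)=2$, where it equals $3/2$. The step I expect to be the main obstacle is obtaining the relative bound: one has to recognize that probing $f$ at the critical node $z$ of $g$ alone does not capture how far the head $y$ of $g$ has been displaced, that probing at $y$ alone is equally insufficient, and that neither $\mu(g)\le 3$ nor $\mu(g)\le 2\mu(f)-1$ by itself gives $3/2$ — both are needed, and the tight configuration lives precisely at their crossover $\mu(f)=2$.
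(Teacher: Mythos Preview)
Your argument is correct and, in fact, cleaner than the paper's. The paper fixes the least common ancestors $t=\mathrm{lca}_T(y',z)$ and $t'=\mathrm{lca}_T(y',y)$, splits into two cases according to which of $t,t'$ is the ancestor of the other, introduces the auxiliary lengths $A,B,C$ along the $y$--$t'$--$t$--$z$ spine, and then in each case performs a further subcase split (e.g.\ $B+C\ge A$ versus $B+C<A$, or $D\ge C$ versus $D<C$) to squeeze the ratio below $3/2$. Your route sidesteps all of this topology: you derive two \emph{global} inequalities, the absolute bound $\mu(g)\le 3$ and the relative bound $\mu(g)\le 2\mu(f)-1$, and then observe that $\min\{3,\,2\mu(f)-1\}/\mu(f)$ is maximized at the crossover $\mu(f)=2$, where it equals $3/2$. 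The relative bound comes from probing $f$ at $y$ and at $z$ and invoking the tree triangle inequality $d_T(y,z)\le d_T(y,y')+d_T(y',z)$, together with the structural fact $d_{G-e}(s,\cdot)\ge D$ on $V(T_b)$ (and $\mu(f)\ge 1$ to push $(\mu(f)-1)D/d_{G-e}(s,z)$ up to $\mu(f)-1$). Each step checks out. What your decomposition buys is a case-free proof that also makes the tight configuration transparent: the paper's tight example has $\mu(f)\to 2$ and $\mu(g)\to 3$, exactly the crossover your analysis predicts.
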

\begin{proof}
Let $z$ be the critical node for the good swap edge $g$, and let $t$ (resp., $t'$) denote the \emph{least common ancestor} in $T$ between $y'$ and $z$ (resp., $y'$ and $y$). Let $D=d_T(s,x)+w(x,y)=d_{G-e}(s,y)$. By choice of $g$, it holds that $d_{G-e}(s,z) \geq D$ and $d_{G-e}(s,y' ) \geq D$. We divide the proof into the following two cases, as depicted in Figure~\ref{fig:quality_max}: either (1) $t$ is an ancestor of $t'$ in $T$, or (2) $t'$ is an ancestor of $t$ in $T$. Let $A,B,C$ denote the distance in $T$ between $y$ and $t'$, $t'$ and $t$, $t$ and $z$, respectively.
\begin{figure}[t]
	\centering
	\includegraphics[scale=1]{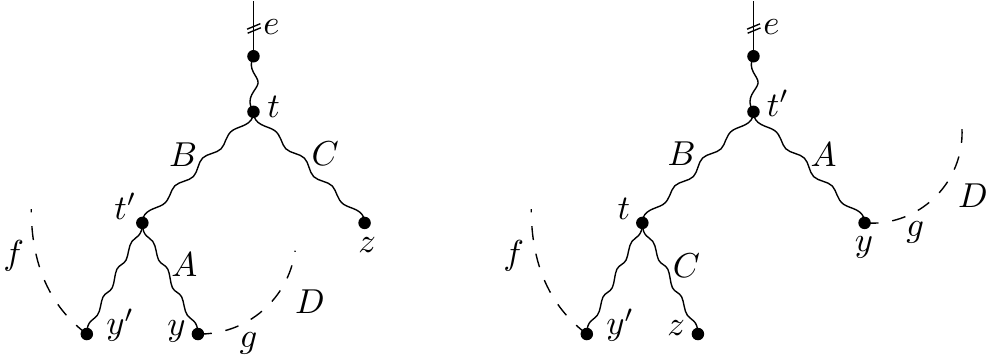} \caption{The figure shows the two cases of the analysis, on the left $t$ is an ancestor of $t^\prime$, while on the right the opposite holds. The splines denote a path, while the straight lines represent a single edge.} \label{fig:quality_max}
\end{figure}
\subsubsection*{Case 1}
Since $t$ is an ancestor of $t'$ (left side of Figure~\ref{fig:quality_max}), we have that $d_{T_{e/f}}(s,y) \geq D+A$ and we can write:
\[
	\sigma_{G-e}(T_{e/f},y) \geq \frac{D+A}{d_{G-e}(s,y)} = \frac{D+A}{D} \geq \frac{D+A}{d_{G-e}(s,z)},
\]

\noindent
and similarly $\sigma_{G-e}(T_{e/f},z) \geq \frac{D+B+C}{d_{G-e}(s,z)}$.
Moreover, by the definition of $\mu(\cdot)$ we have that $\mu(f) \geq \max\{\sigma_{G-e}(T_{e/f},y), \sigma_{G-e}(T_{e/f},z)\}$. The previous inequalities together imply:
\begin{align}\label{eq:case1}
	\frac{\mu(g)}{\mu(f)} \leq \frac{\sigma_{G-e}(T_{e/g},z)}{\max\{\sigma_{G-e}(T_{e/f},y), \sigma_{G-e}(T_{e/f},z)\}} \leq \frac{A+B+C+D}{D+\max\left\{ A, B+C \right\}.}
\end{align}

Now we divide the proof into two subcases, depending on whether $B+C \geq A$ or $B+C < A$. Observe that $D \geq d_G(s,y) \geq A$.
If $B+C \geq A$, then \eqref{eq:case1} becomes:
\[
	\frac{\mu(g)}{\mu(f)} \leq \frac{A+B+C+D}{B+C+D} = 1 + \frac{A}{B+C+D} \leq 1 + \frac{A}{2A} = \frac{3}{2},
\]
otherwise, if $B+C < A$, then  \eqref{eq:case1} becomes:
\[
	\frac{\mu(g)}{\mu(f)} \leq \frac{A+B+C+D}{A+D} < \frac{2A+D}{A+D} = 1 + \frac{A}{A+D} \leq 1 + \frac{A}{2A} = \frac{3}{2}.
\]

\subsubsection*{Case 2}
Assume now that $t'$ is an ancestor of $t$ (right side of Figure~\ref{fig:quality_max}). Since
 \[\mu(f) \geq \sigma_{G-e}(T_{e/f},y) \geq \frac{d_{G-e}(s,y')+A+B}{d_{G-e}(s,y)}=\frac{d_{G-e}(s,y')+A+B}{D},\]
we have that:
\begin{align*}
	\frac{\mu(g)}{\mu(f)}& \leq \frac{A+B+C+D}{d_{G-e}(s,z)} \cdot \frac{D}{d_{G-e}(s,y')+A+B} \\
	&\leq \frac{A+B+C+D}{d_{G-e}(s,z)} \cdot \frac{D}{A+B+D}
\end{align*}
and since $d_{G-e}(s,z) \geq d_G(s,z) \geq C$, and recalling that $d_{G-e}(s,z) \geq D$, we have:
\begin{equation}
\label{eq:case2}
	\frac{\mu(g)}{\mu(f)} \leq \frac{A+B+C+D}{A+B+D} \cdot \frac{D}{\max \left\{ C,D \right\}}
	= \left( 1 + \frac{C}{A+B+D} \right) \cdot \frac{D}{\max \left\{ C,D \right\}}.
\end{equation}
Moreover, notice that also the following holds:
\begin{align}
\nonumber \frac{\mu(g)}{\mu(f)} & \leq \frac{\mu(g)}{\sigma_{G-e}(T_{e/f},z)} \leq \frac{A+B+C+D}{d_{G-e}(s,z)} \cdot \frac{d_{G-e}(s,z)}{d_{G-e}(s,y') + d_T(y',t)+C} \\
 &\leq \frac{A+B+C+D}{C+D} = 1+ \frac{A+B}{C+D}. \label{eq:boh}
\end{align}
We divide the proof into the following two subcases, depending on whether $D \geq C$ or $D < C$.
In the first subcase, i.e., $D \geq C$, we have that \eqref{eq:case2} becomes $\frac{\mu(g)}{\mu(f)} \le 1 + \frac{C}{A+B+D}$,
and hence, by combining this inequality with (\ref{eq:boh}), we obtain:
\begin{align*}
\frac{\mu(g)}{\mu(f)} & \leq 1+ \min \left\{ \frac{C}{A+B+D}, \frac{A+B}{C+D}\right\} \\
&\leq 1+ \min \left\{ \frac{C}{A+B+C}, \frac{A+B}{2C}\right\} \leq 1+\frac{1}{2}=\frac{3}{2}.
\end{align*}
In the second subcase, i.e., $D < C$, \eqref{eq:case2} becomes:
\begin{equation}
\label{eq:subcase2.2}
\frac{\mu(g)}{\mu(f)}  \leq \left( 1 + \frac{C}{A+B+D} \right) \cdot \frac{D}{C} \leq \frac{D}{C} + \frac{D}{A+B+D} < 1 + \frac{D}{A+B+D},
\end{equation}
and hence, by combining \eqref{eq:subcase2.2} and \eqref{eq:boh}, we have that:
\begin{align*}
 \nonumber \frac{\mu(g)}{\mu(f)}  &\leq 1+ \min \left\{ \frac{D}{A+B+D}, \frac{A+B}{C+D}\right\} \\
 &\leq 1+ \min \left\{ \frac{D}{A+B+D}, \frac{A+B}{2D}\right\} \leq 1+\frac{1}{2}=\frac{3}{2},
\end{align*}
from which the claim follows.
\qed
\end{proof}

Given the result of Lemma~\ref{thm:apx-absems}, we can derive an efficient algorithm to compute all the GSE for \absems. More precisely, in \cite{NPW03} it was shown how to find them in $O(m \, \alpha(m,n))$ time. Essentially, the approach used in \cite{NPW03} was based on a reduction to the \emph{SPT sensitivity analysis} problem \cite{Tar82}. However, in \cite{Pet05} it was proposed a faster solution to such a problem, running in $O(m \, \log \alpha(m,n))$ time. Thus, we can provide the following

\begin{theorem}
There exists a $3/2$-approximation algorithm that solves the \absems problem in $O(m \log \alpha (m,n))$ time.
\end{theorem}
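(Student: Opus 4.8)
The plan is to reduce the whole statement to the quality bound already established in Lemma~\ref{thm:apx-absems}. That lemma shows that, for every failing edge $e=(a,b)\in E(T)$, any swap edge
\[
  g \in \arg\min_{(x',v')\in C(e)}\big\{d_T(s,x')+w(x',v')\big\}
\]
has max-stretch within a factor $3/2$ of that of a best swap edge for $e$; hence the set $\{g_e\}_{e\in E(T)}$ of such edges is a valid set of GSE, and the theorem follows as soon as we can compute all of them in $O(m\log\alpha(m,n))$ total time. So what remains is purely algorithmic: given the SPT $T$ rooted at $s$, compute for every tree edge the minimum of $d_T(s,x')+w(x',v')$ over its swap edges, together with a witness realising it.

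First I would install the standard auxiliary structures: the distances $d_T(s,\cdot)$, which are available directly from $T$, and an $O(m)$-preprocessing / $O(1)$-query lowest-common-ancestor oracle on $T$. Then I would recast the minimization, following the approach of \cite{NPW03}, as an instance of the tree-path problem underlying SPT sensitivity analysis. The key point is that a non-tree edge $f=\{x,y\}$ is a swap edge exactly for the tree edges lying on the tree path $\pi(x,y)$, and that splitting $\pi(x,y)$ at the least common ancestor $\ell$ of $x$ and $y$ makes the cost that $f$ charges to each such tree edge \emph{path-independent}: for a tree edge on the part of $\pi(x,y)$ from $\ell$ to $y$ the endpoint of $f$ on the $s$-side is $x$, so $f$ charges the constant $d_T(s,x)+w(f)$, and symmetrically it charges the constant $d_T(s,y)+w(f)$ to the tree edges on the part from $\ell$ to $x$. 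Thus I can replace each of the $m$ non-tree edges by at most two weighted tree paths, and the desired $g_e$ is a path of minimum weight covering $e$ --- which is exactly the combinatorial core of SPT sensitivity analysis \cite{Tar82}.

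Finally I would invoke the fastest available solver for that problem: the reduction fed into Tarjan's classical algorithm \cite{Tar82} (as in \cite{NPW03}) already produces all the $g_e$'s in $O(m\,\alpha(m,n))$ time, and replacing that routine with Pettie's sub-inverse-Ackermann algorithm \cite{Pet05}, which runs in $O(m\log\alpha(m,n))$ time, yields the claimed bound; the $O(1)$-per-edge max-stretch accounting together with Lemma~\ref{thm:apx-absems} then closes the argument. I expect the only delicate step to be the reduction itself --- specifically, verifying that the cost a non-tree edge charges is genuinely independent of which edge of $\pi(x,y)$ is considered, so that the instance really does fit the sensitivity-analysis template. (For the more demanding criterion $d_{T_{e/f}}(s,b)$ studied in \cite{NPW03} a small additive normalisation of the edge costs is needed; for our criterion the splitting at $\ell$ described above already suffices.) Everything downstream of the reduction is a black-box call to \cite{Pet05}.
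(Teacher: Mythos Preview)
Your proposal is correct and follows essentially the same approach as the paper: the paper's proof simply invokes Lemma~\ref{thm:apx-absems} for the approximation guarantee, cites \cite{NPW03} for the reduction of the good-swap-edge computation to the SPT sensitivity analysis problem of \cite{Tar82}, and then plugs in Pettie's $O(m\log\alpha(m,n))$ solver \cite{Pet05}. You unpack the reduction (via the LCA split of each non-tree edge's tree path) in more detail than the paper does, but the underlying argument is identical.
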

We conclude this section with a tight example which shows that the analysis provided in Lemma~\ref{thm:apx-absems} is tight (see Figure~\ref{fig:tight_example_GSE_max_stretch}).

\begin{figure}[t]
	\centering
	\includegraphics[scale=1.1]{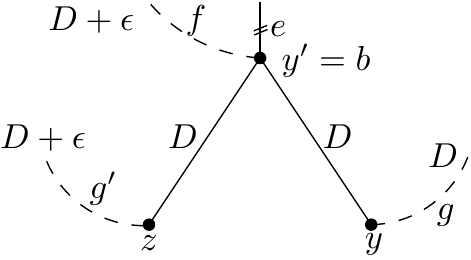}
	\caption{A tight example showing that the quality of the good swap edge $g$ computed by the algorithm is a factor of $3/2$ away from the qualify of a best swap edge $f$. In the picture, it is assumed that the distance from $s$ to $b$ is equal to 0, while the three dashed edges are assumed to be incident to the source, and the labels correspond to their weight. Then, $\mu(f) = \sigma_{G-e}(T_{e/f},y) =\frac{2D+\epsilon}{D} \simeq 2$, while  $\mu(g) = \sigma_{G-e}(T_{e/g},z) = \frac{3D}{D+\epsilon}\simeq 3$, for small values of $\epsilon$.}
	\label{fig:tight_example_GSE_max_stretch}
\end{figure}

\section{Quality analysis}
As for previous studies on swap edges, it is interesting now to see how the tree obtained from swapping
a failing edge $e=(a,b)$ with its BSE $f$ compares with a true SPT of $G-e$. According to our swap criteria, we will then analyze the lower and upper bounds of the max- and avg-stretch of $f$, i.e., $\mu(f)$ and $\lambda(f)$, respectively.

As already observed in the introduction, it is well-known \cite{NPW03} that for the swap edge, say $g$, which belongs to the shortest path in $G-e$ between $s$ and the root of the detached subtree $T_b$, we have that for any $v \in V(T_b)$, $\sigma_{G-e}(T_{e/g},v) \leq 3$. This immediately implies that $\mu(g),\lambda(g) \leq 3$, namely $\mu(f),\lambda(f) \leq 3$. These bounds happen to be tight, as shown in Figure \ref{fig:tight}.

\begin{figure}[t]
\centering
	\includegraphics[scale=.86]{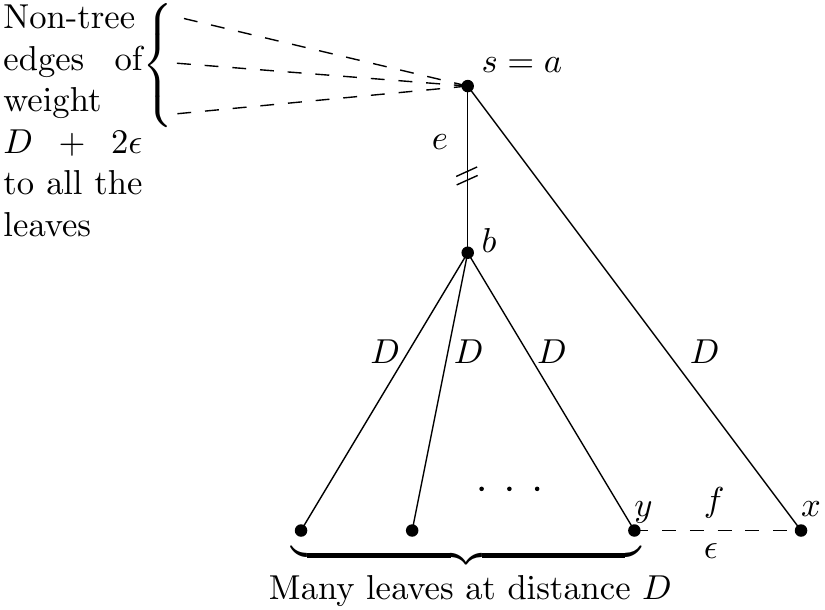}
\caption{Tight ratios for $\mu(f)$ and $\lambda(f)$.
In the picture, the SPT $T$ (solid edges) along with the removed edge $e=(a,b)$ of weight $0$; non-tree edges
are dashed and the best swap edge (for both the maximum and the average stretch) is easily seen to be $f=(x,y)$, of weight $\epsilon >0$.
Then, we have that $\mu(f)=\frac{3D+\epsilon}{D+2 \epsilon}$, which tends to $3$ for small values of $\epsilon$, while $\lambda(f)$ tends to $3$ as well, as soon as the number of leaves grows. Notice that $f$ is also a good swap edge for $e$.}
\label{fig:tight}
\end{figure}

Let us now analyze the lower and upper bounds of the max-stretch of a \emph{good} swap edge $g$, i.e., $\mu(g)$, as defined in the previous section.
First of all, once again it was proven in \cite{NPW01} that for any $v \in V(T_b)$, $\sigma_{G-e}(T_{e/g},v) \leq 3$, which implies that $\mu(g) \leq 3$. Moreover, the example shown in Figure \ref{fig:tight} can be used to verify that this bound is tight.

\section{Conclusions}
In this paper we have studied two natural SPT swap problems, aiming to minimize, after the failure of any edge of the given SPT, either the maximum or the average stretch factor induced by a swap edge. We have first proposed two efficient algorithms to solve both problems. Then, aiming to the design of faster algorithms, we developed for the maximum-stretch measure an almost linear algorithm guaranteeing a $3/2$-approximation w.r.t. the optimum.

Concerning future research directions, the most important open problem remains that of finding a linear-size edge-fault-tolerant SPT with a (maximum) stretch factor w.r.t. the root better than 3, or to prove that this is unfeasible. Another interesting open problem is that of improving the running time of our exact solutions. Notice that both our exact algorithms pass through the computation of all the post-failure single-source distances, and if we could avoid that we would get faster solutions. At a first glance, this sounds very hard, since the stretches are heavily dependant on post-failure distances, but, at least in principle, one could exploit some monotonicity property among swap edges that could allow to skip such a bottleneck.
Besides that, it would be nice to design a fast approximation algorithm for the average-stretch measure. Apparently, in this case it is not easy to adopt an approach based on good swap edges as for the maximum-stretch case, since swap edges optimizing other reasonable swap criteria (e.g., minimizing the distance towards the root of the detached subtree, or minimizing the distance towards a detached node) are easily seen to produce an approximation ratio of 3 as opposed to a BSE. A candidate solution may be that of selecting a BSE w.r.t. the sum-of-distances criterium, which can be solved in almost linear time \cite{DP07}, but for which we are currently unable to provide a corresponding comparative analysis.

Finally, we mention that a concrete task which will be pursued is that of conducting an extensive experimental analysis of the true performances of our algorithms, to check whether for real-world instances the obtained stretches are sensibly better or not w.r.t. the theoretical bounds.

\bibliographystyle{plain}
\bibliography{bib}

\begin{thebibliography}{10}

\bibitem{BiloCGP15}
Davide Bil{\`{o}}, Feliciano Colella, Luciano Gual{\`{a}}, Stefano Leucci, and
  Guido Proietti.
\newblock A faster computation of all the best swap edges of a tree spanner.
\newblock In {\em Structural Information and Communication Complexity - 22nd
  International Colloquium, {SIROCCO} 2015, Montserrat, Spain, July 14-16,
  2015}, pages 239--253, 2015.

\bibitem{BiloGG0P15}
Davide Bil{\`{o}}, Fabrizio Grandoni, Luciano Gual{\`{a}}, Stefano Leucci, and
  Guido Proietti.
\newblock Improved purely additive fault-tolerant spanners.
\newblock In {\em Algorithms - {ESA} 2015 - 23rd Annual European Symposium,
  Patras, Greece, September 14-16, 2015, Proceedings}, pages 167--178, 2015.

\bibitem{BGLP14}
Davide Bil{\`{o}}, Luciano Gual{\`{a}}, Stefano Leucci, and Guido Proietti.
\newblock Fault-tolerant approximate shortest-path trees.
\newblock In {\em Algorithms - {ESA} 2014 - 22th Annual European Symposium,
  Wroclaw, Poland, September 8-10, 2014. Proceedings}, pages 137--148, 2014.

\bibitem{BGP14}
Davide Bil{\`{o}}, Luciano Gual{\`{a}}, and Guido Proietti.
\newblock Finding best swap edges minimizing the routing cost of a spanning
  tree.
\newblock {\em Algorithmica}, 68(2):337--357, 2014.

\bibitem{BGP15}
Davide Bil{\`{o}}, Luciano Gual{\`{a}}, and Guido Proietti.
\newblock A faster computation of all the best swap edges of a shortest paths
  tree.
\newblock {\em Algorithmica}, 73(3):547--570, 2015.

\bibitem{ChechikLPR09}
Shiri Chechik, Michael Langberg, David Peleg, and Liam Roditty.
\newblock Fault-tolerant spanners for general graphs.
\newblock In {\em Proceedings of the 41st Annual {ACM} Symposium on Theory of
  Computing, {STOC} 2009, Bethesda, MD, USA, May 31 - June 2, 2009}, pages
  435--444, 2009.

\bibitem{DLPP13}
Ajoy~K. Datta, Lawrence~L. Larmore, Linda Pagli, and Giuseppe Prencipe.
\newblock Linear time distributed swap edge algorithms.
\newblock In {\em Algorithms and Complexity, 8th International Conference,
  {CIAC} 2013, Barcelona, Spain, May 22-24, 2013. Proceedings}, pages 122--133,
  2013.

\bibitem{DP07}
Aleksej {Di Salvo} and Guido Proietti.
\newblock Swapping a failing edge of a shortest paths tree by minimizing the
  average stretch factor.
\newblock {\em Theor. Comput. Sci.}, 383(1):23--33, 2007.

\bibitem{DinitzK11}
Michael Dinitz and Robert Krauthgamer.
\newblock Fault-tolerant spanners: better and simpler.
\newblock In {\em Proceedings of the 30th Annual {ACM} Symposium on Principles
  of Distributed Computing, {PODC} 2011, San Jose, CA, USA, June 6-8, 2011},
  pages 169--178, 2011.

\bibitem{FEPPS04}
Paola Flocchini, Antonio~Mesa Enriques, Linda Pagli, Giuseppe Prencipe, and
  Nicola Santoro.
\newblock Efficient protocols for computing the optimal swap edges of a
  shortest path tree.
\newblock In {\em Exploring New Frontiers of Theoretical Informatics, {IFIP}
  18th World Computer Congress, {TC1} 3rd International Conference on
  Theoretical Computer Science (TCS2004), 22-27 August 2004, Toulouse, France},
  pages 153--166, 2004.

\bibitem{FEPPS06}
Paola Flocchini, Antonio~Mesa Enriques, Linda Pagli, Giuseppe Prencipe, and
  Nicola Santoro.
\newblock Point-of-failure shortest-path rerouting: Computing the optimal swap
  edges distributively.
\newblock {\em {IEICE} Transactions}, 89-D(2):700--708, 2006.

\bibitem{FPPSW08}
Paola Flocchini, Linda Pagli, Giuseppe Prencipe, Nicola Santoro, and Peter
  Widmayer.
\newblock Computing all the best swap edges distributively.
\newblock {\em J. Parallel Distrib. Comput.}, 68(7):976--983, 2008.

\bibitem{GP07}
Luciano Gual{\`{a}} and Guido Proietti.
\newblock Exact and approximate truthful mechanisms for the shortest paths tree
  problem.
\newblock {\em Algorithmica}, 49(3):171--191, 2007.

\bibitem{IR98}
Giuseppe~F. Italiano and Rajiv Ramaswami.
\newblock Maintaining spanning trees of small diameter.
\newblock {\em Algorithmica}, 22(3):275--304, 1998.

\bibitem{IIOY05}
Hiro Ito, Kazuo Iwama, Yasuo Okabe, and Takuya Yoshihiro.
\newblock Single backup table schemes for shortest-path routing.
\newblock {\em Theor. Comput. Sci.}, 333(3):347--353, 2005.

\bibitem{jordan1869assemblages}
Camille Jordan.
\newblock Sur les assemblages de lignes.
\newblock {\em J. Reine Angew. Math}, 70(185):81, 1869.

\bibitem{NPW01}
Enrico Nardelli, Guido Proietti, and Peter Widmayer.
\newblock A faster computation of the most vital edge of a shortest path.
\newblock {\em Inf. Process. Lett.}, 79(2):81--85, 2001.

\bibitem{NPW03}
Enrico Nardelli, Guido Proietti, and Peter Widmayer.
\newblock Swapping a failing edge of a single source shortest paths tree is
  good and fast.
\newblock {\em Algorithmica}, 35(1):56--74, 2003.

\bibitem{Pet05}
Seth Pettie.
\newblock Sensitivity analysis of minimum spanning trees in
  sub-inverse-{A}ckermann time.
\newblock In {\em Algorithms and Computation, 16th International Symposium,
  {ISAAC} 2005, Sanya, Hainan, China, December 19-21, 2005, Proceedings}, pages
  964--973, 2005.

\bibitem{Tar82}
Robert~Endre Tarjan.
\newblock Sensitivity analysis of minimum spanning trees and shortest path
  trees.
\newblock {\em Inf. Process. Lett.}, 14(1):30--33, 1982.

\bibitem{WHC08}
Bang~Ye Wu, Chih{-}Yuan Hsiao, and Kun{-}Mao Chao.
\newblock The swap edges of a multiple-sources routing tree.
\newblock {\em Algorithmica}, 50(3):299--311, 2008.

\end{thebibliography}
\end{document}